\documentclass[letterpaper,10pt,conference]{ieeeconf}

 \IEEEoverridecommandlockouts
\overrideIEEEmargins

\usepackage[T1]{fontenc}% optional T1 font encoding

%\ifCLASSINFOpdf
%\else
%\fi
%\interdisplaylinepenalty=2500

\usepackage[table]{xcolor}
\usepackage{multirow}
\usepackage{graphics} 

\graphicspath{{./fig/}}

\usepackage{tikz}
\usetikzlibrary{positioning}
\usetikzlibrary{arrows}
\usepackage{epstopdf}
\usepackage{epsfig} 
\usepackage{subcaption}
\usepackage{psfrag}
\usepackage[width=.95\linewidth]{caption}
\usepackage{lipsum}
\usepackage{amsmath} 
\usepackage{amssymb}  
\usepackage{mathbbol}
\usepackage{cite}
\usepackage{wrapfig}
\usepackage{mathrsfs}
\usepackage{url}
\usepackage{amsmath,amsfonts, amssymb,color}
\usepackage[font=footnotesize]{caption}
\usepackage[pdfpagelabels,pdfpagemode=None,breaklinks=true]{hyperref} \hypersetup{
     colorlinks   = true,
     citecolor    = blue
}
\usetikzlibrary{shapes,arrows,calc}
\usepackage{marvosym}

\newtheorem{lemma}{Lemma}
\newtheorem{remark}{Remark}
\newtheorem{proposition}{Proposition}

\makeatletter
\def\@opargbegintheorem#1#2#3{\trivlist
   \item[]{\it #1\ #2\ (#3)} \itshape}
\makeatother

%\makeatletter
%\renewcommand\bibsection%
%{
%  \section*{\refname
%    \@mkboth{\MakeUppercase{\refname}}{\MakeUppercase{\refname}}}
%}
%\makeatother

\newcommand{\ignore}[1]{}

\setlength\abovedisplayskip{4pt}
\setlength\belowdisplayskip{4pt}

\DeclareMathAlphabet{\mathcal}{OMS}{cmsy}{m}{n}      

%%%%%% Mathematical Macros %%%%%%

\newcommand{\setdef}[2]{\{#1 \; | \; #2\}}

% OPTIMIZATION %
 %\text{$\,$}t.}}

% PROBABILITY %

% SETS %

\newcommand{\SUt}{\mathtt{SU}}
\newcommand{\SPt}{\mathtt{SP}}
\newcommand{\IPt}{\mathtt{IP}}
\newcommand{\IUt}{\mathtt{IU}}

%\newcommand{\CVaR}[1]{#1\! \!- \!\!\operatorname{CVaR}}

% FEASIBILITY SETS %

% THIS PAPER %

% MISC %
\newcommand{\oprocendsymbol}{\hbox{$\bullet$}}
\newcommand{\oprocend}{\relax\ifmmode\else\unskip\hfill\fi\oprocendsymbol}

\allowdisplaybreaks

%%%%%% Commenting Macros %%%%%%

\newcommand{\subscr}[2]{#1_{\textup{#2}}}

% correct bad hyphenation here
\hyphenation{op-tical net-works semi-conduc-tor}

\def \mr{\mathrm}

%%%%%%%%%%%%%%%%%%%%%%%%%%%%%
%%%%%%%%%%%%%%%%%%%%%%%%%%%%%
%%%%%%%%%%%%%%%%%%%%%%%%%%%%%

\begin{document}

\title{Epidemic Propagation under Evolutionary Behavioral Dynamics: \\ Stability and Bifurcation Analysis
	\thanks{This work was supported in part by ARO grant W911NF-18-1-0325.}
}
\author{Abhisek Satapathi, Narendra Kumar Dhar, Ashish R. Hota, and Vaibhav Srivastava
	\thanks{A. Satapathi and A. R. Hota are with the Department of Electrical Engineering, IIT Kharagpur, Kharagpur, West Bengal, India, 721302. {\small Email: \tt{abhisek.ee@iitkgp.ac.in,ahota@ee.iitkgp.ac.in}}}
	\thanks{N. K. Dhar and V. Srivastava are with the Department of Electrical and Computer Engineering, Michigan State University, East Lansing, MI 48824. {\small Email: \tt\{dharnare, vaibhav\}@msu.edu}}
}

\maketitle

\begin{abstract}
We consider the class of SIS epidemic models in which  a large population of individuals chooses whether to adopt protection or to remain unprotected as the epidemic evolves. For a susceptible individual, adopting protection reduces the probability of becoming infected but it comes with a cost that is weighed with the instantaneous risk of becoming infected. An infected individual adopting protection transmits a new infection with a smaller probability compared to an unprotected infected individual.  We focus on the replicator evolutionary dynamics  to model the evolution of protection decisions by susceptible and infected subpopulations. We completely characterize the existence and local stability of the equilibria of the resulting coupled epidemic and replicator dynamics. We further show how the stability of different equilibrium points gets exchanged as certain parameters change. Finally, we investigate the system behavior under timescale separation between the epidemic and the evolutionary dynamics.
\end{abstract}

% Note that keywords are not normally used for peerreview papers.
%\begin{IEEEkeywords}
%To be added
%\end{IEEEkeywords}

\IEEEpeerreviewmaketitle

%disable before submission, puts page numbers on pages. 
%\pagestyle{plain}

\section{Introduction}

Infectious diseases or epidemics spread through human society via social interactions among infected and healthy individuals. There is a rich body of work devoted to developing mathematical models that capture the dynamic evolution of such epidemics \cite{mei2017dynamics,nowzari2016analysis}. In the well-studied susceptible-infected-susceptible (SIS) epidemic model, an individual belongs to one of two infectious states: susceptible (healthy) and infected. A susceptible individual becomes infected with a certain probability termed infection rate if it comes in contact with an infected individual, while an infected individual recovers with a certain recovery rate. 

Individuals adopt protective measures, such as wearing masks or becoming vaccinated, to avoid becoming infected, often in a strategic and decentralized manner. Consequently, the interplay between epidemic dynamics and human decision-making has been studied in the framework of game theory \cite{chang2020game}; both static or single-shot games to model vaccination decisions \cite{hota2019game,trajanovski2015decentralized} and dynamic games that model evolution of protective decisions \cite{eksin2016disease,huang2019differential,hota2020impacts} have been analyzed for SIS epidemics in recent past. Most of these works explore structural properties of the equilibrium strategies and epidemic evolution under those strategies. 

However, the problem of a large group of strategic individuals finding or learning equilibrium strategies is quite challenging \cite{daskalakis2009complexity,sandholm2010population}. In settings with a large population of agents, {\it evolutionary dynamics} have been proposed and their convergence behavior to the set of equilibria is analyzed \cite{hofbauer2003evolutionary,sandholm2010population}. This class of learning dynamics involves repeated play or interaction of a static game with the same set of payoff functions. Evolutionary learning with dynamic payoff functions has also recently been explored by exploiting the concept of passivity \cite{arcak2020dissipativity,park2019payoff}. In contrast with the above settings, epidemic games exhibit a dynamically changing proportion of healthy and infected individuals due to epidemic dynamics. A recent work \cite{elokdadynamic_cdc_arxiv} investigated this dynamic evolution of infection dynamics and strategic decisions for a class of susceptible-asymptomatic-infected-recovered (SAIR) epidemic setting in the framework of dynamic population games \cite{elokda2021dynamic}.

Timescale separation has been used to study the interaction  of both evolutionary as well as epidemic dynamics with other dynamical systems. For example, 
slow-fast coupled environmental and evolutionary game dynamics were first studied in \cite{weitz2016oscillating}. A recent paper \cite{gong2021different} also studies slow-fast dynamics in a setting related to environment and ecology. In the context of epidemics, a recent work \cite{al2021long} explores slow-fast dynamics in the context of SIR epidemics but does not consider a behavioral response to epidemics or any game-theoretic considerations. SIS dynamics with risk-tolerator or risk-averter individuals that respond  differently as a function of the leaky-integrated infectious population are studied in \cite{zhou2020active};  however, game-theoretic dynamics of such behavioral responses have not been investigated. Coupled evolution of disease and behavior has recently been investigated in \cite{martins2022epidemic,khazaei2021disease} for SIRS and SEIR epidemic models, respectively. 

In this paper, we build upon the above line of work and consider the class of SIS epidemic models where a large population of individuals chooses whether to adopt protection or to remain unprotected as the epidemic evolves. For a susceptible individual, being protected reduces the probability of becoming infected. Similarly, an infected individual who adopts protection transmits or causes a new infection with a smaller probability compared to an infected individual who is unprotected. For a susceptible individual, adopting protection comes with a cost (such as discomfort due to wearing masks or adhering to social distancing protocols) which is weighed with the instantaneous risk of becoming infected; the latter depends on the current epidemic prevalence and the proportion of individuals adopting protection. 

In order to lay the foundations for a rigorous investigation of coupled epidemic and behavioral dynamics, we focus on the {\it replicator dynamics} \cite{hofbauer2003evolutionary,sandholm2010population} to model the evolution of protection decisions by susceptible and infected subpopulations. We completely characterize the equilibria (existence and local stability) of the coupled epidemic and replicator dynamics and show how the stability of different equilibrium points gets exchanged as certain parameters change. 

We further explore the behavior of the coupled dynamics under timescale separation leading to a slow-fast dynamical system \cite{berglund2006noise}. We focus on two timescale separation scenarios. First, 
we analyze the  case in which the replicator dynamics are faster and the behavior converges to equilibrium strategies quickly leading to a variable structure system as observed in prior works (such as \cite{hota2020impacts}). We then consider the case where the epidemic dynamics are faster than the replicator dynamics. We numerically illustrate the dynamic bifurcation behavior and the bifurcation delay phenomenon. Our results show that the timescale separation of disease and behavioral dynamics plays a significant role in disease evolution, and may even lead to inadequate and counterproductive epidemic control measures if not properly accounted for. 
\section{Coupled SIS Epidemic and Evolutionary Behavioral Model}
\label{section:model}

In this section, we formally introduce the coupled evolution of the SIS epidemic and protection adoption  behavior in a homogeneous large-population setting. Let the proportion of susceptible and infected individuals be $s(t)$ and $y(t)$, respectively. Both $s(t), y(t) \in [0,1]$ with $s(t) + y(t) = 1$ for all $t \geq 0$. Individuals choose whether to adopt protection against the epidemic or not; these actions are denoted by $\mathtt{P}$ and $\mathtt{U}$, respectively. Consequently, the {\it population state} at time $t$ is defined as $x(t) := [x_{\SUt}(t) \quad x_{\SPt}(t) \quad x_{\IUt}(t) \quad x_{\IPt}(t)]^\intercal \in [0,1]^4$, where $x_{\SUt}$ denotes the proportion of individuals who are susceptible and choose to remain unprotected, $x_{\IPt}$ denotes the proportion of infected individuals who adopt protection, and so on. At time $t$, we have $x_{\SUt}(t) + x_{\SPt}(t) = s(t)$, $x_{\IUt}(t) + x_{\IPt}(t) = y(t)$, and $\mathbb{1}^\intercal x(t) = 1$ where $\mathbb{1}$ is a vector of appropriate dimension with all entries $1$. 

Individuals choose their action to maximize their payoffs which depends on the population state $x(t)$ (which includes information regarding infection prevalence since $x_{\IUt}(t) + x_{\IPt}(t) = y(t)$). For an infected individual, there is no further risk of infection, and as a result, we define its payoff to be constant parameters given by $-c_{\IUt}$ if it remains unprotected and $- c_{\IPt}$ if it adopts protection, respectively. In particular, $c_{\IUt} > 0$ captures the cost of breaking isolation/quarantine protocols for an infected individual or could represent worsening of the disease if an infected individual continues to be irresponsible. Thus, we assume $c_{\IUt} > c_{\IPt} \geq 0$.

A susceptible individual trades off the cost of adopting protection, denoted by $c_P > 0$, and the expected cost of becoming infected. The latter is the product of the loss upon infection $L > 0$ and the instantaneous probability of becoming infected which depends on its action as well as the population state. Specifically, let $\beta_u$ and $\beta_p$ denote the probabilities of an infected individual causing a new infection if it is unprotected and protected, respectively. We assume that $\beta_u > \beta_p \geq 0$. Similarly, let a protected susceptible individual be $\alpha \in (0,1)$ times (less) likely to become infected compared to an unprotected susceptible individual. 

Thus, the payoff vector of individuals at population state $x$ is now defined as
\begin{equation}
F(x) := 
\begin{bmatrix}
F_{\SUt}(x) \\
F_{\SPt}(x) \\
F_{\IUt}(x) \\
F_{\IPt}(x)
\end{bmatrix}
=
\begin{bmatrix}
-L (\beta_u x_{\IUt} + \beta_p x_{\IPt}) \\
-c_P - L \alpha (\beta_u x_{\IUt} + \beta_p x_{\IPt}) \\
- c_{\IUt} \\
- c_{\IPt}
\end{bmatrix}
,
\end{equation}
where $F_{\SUt}(x)$ denote the payoff for an individual who is susceptible and unprotected at population state $x$, and so on. Susceptible individuals who adopt protection pay a cost $c_P$ but experience a reduced risk of becoming infected scaled by factor $\alpha$ as discussed above. 

The population state $x$ is dynamic and time-varying, both due to the epidemic dynamics as well as the evolutionary dynamics of strategically adopting protective behavior. Since the proportion of susceptible and infected individuals is time-varying, the framework of classical population games is not applicable. We now introduce the coupled epidemic and evolutionary dynamics. We first introduce some notation. Let $z_S(t) \in [0,1]$ denote the fraction of susceptible nodes who remain unprotected, i.e., $x_{\SUt}(t) = z_S(t) s(t)$ and $x_{\SPt}(t) = (1-z_S(t)) s(t)$. Similarly, let $z_I(t) \in [0,1]$ denote the fraction of infected nodes who remain unprotected. 

Due to the presence of both unprotected and protected individuals with different infection probabilities, the evolution of infected proportion is given by
\begin{align}
\dot{y}(t) & = (x_{\SUt}(t) + \alpha x_{\SPt}(t)) (\beta_u x_{\IUt}(t) + \beta_p x_{\IPt}(t)) - \gamma y(t) \nonumber
\\ & = \big[(1-y(t)) (z_S(t) + \alpha (1-z_S(t))) \nonumber
\\ & \qquad \qquad \cdot (\beta_u z_I(t) + \beta_p (1-z_I(t))) - \gamma \big] y(t) \nonumber
\\ & =: f_y(y(t),z_S(t),z_I(t)). \label{eq:sis_scalar_com}
\end{align}
The above dynamics are analogous to the conventional scalar SIS epidemic dynamics with effective infection rate $ \subscr{\beta}{eff} = (z_S(t) + \alpha (1-z_S(t))) (\beta_u z_I(t) + \beta_p (1-z_I(t)))$ which now depends on the efficacy of protection and the fractions that adopt protection.\footnote{In particular, if $\alpha = 1$ and $\beta_p = \beta_u = \beta$, i.e., protection is not effective, the effective infection rate is $\beta$.} 

%%%%%%%%%%%%%%%%%%%%%%%%%%%%%%%%%
%%%%%%%%%%%%%%%%%%%%%%%%%%%%%%%%%
%%%%%%%%%%%%%%%%%%%%%%%%%%%%%%%%%

\begin{table*}[ht]
	\centering
	\caption{Existence and stability of equilibria of the coupled dynamics.}
	\label{tab:eq_summary}
	{\renewcommand{\arraystretch}{1.5}
		\begin{tabular}{|c |c |c |c |c |c |}
			\hline
			\multirow{2}{*}{Epidemic} & \multirow{2}{*}{Endemic} & \multicolumn{4}{c|}{Equilibria} \\ \cline{3-6} 
			
			Parameters & Infection Level & $\mathbf{E1}: (0,1,0)$ & $\mathbf{E2}: (y^*_u,1,0)$ & $\mathbf{E3}: (y^*_{\mr{int}},z^*_{S,int},0)$ & $\mathbf{E4}: (y^*_p,0,0)$ \\ \hline \hline
			
			$\gamma > \beta_p$ & $-$ & \cellcolor{green!45} $\checkmark$, stable & $-$ & $-$ & $-$ \\ \hline
			
			\multirow{2}{*}{$\alpha \beta_p < \gamma < \beta_p$} & $y^*_u < y^*_{\mr{int}}$ & $\checkmark$, unstable & \cellcolor{green!45} $\checkmark$, stable  & $-$  & $-$ \\ \cline{2-6} 
			
			& $y^*_{\mr{int}} < y^*_u$ & $\checkmark$, unstable & $\checkmark$, unstable & \cellcolor{green!45} $\checkmark$, stable & $-$ \\ \hline
			
			\multirow{3}{*}{$\gamma < \alpha \beta_p$} & $y^*_{u} < y^*_{\mr{int}}$ & $\checkmark$, unstable & \cellcolor{green!45} $\checkmark$, stable &  $-$ & $\checkmark$, unstable \\ \cline{2-6} 
			
			& $y^*_p < y^*_{\mr{int}} < y^*_u$ & $\checkmark$, unstable & $\checkmark$, unstable  & \cellcolor{green!45} $\checkmark$, stable & $\checkmark$, unstable \\ \cline{2-6}
			
			& $y^*_{\mr{int}} < y^*_p$ & $\checkmark$, unstable & $\checkmark$, unstable & $-$ & \cellcolor{green!45} $\checkmark$, stable \\ \hline
			
	\end{tabular}}
\end{table*}

We now define the evolutionary dynamics to capture the evolution of unprotected fractions in both susceptible and infected subpopulations. We focus on the class of replicator dynamics~\cite{sandholm2010population} in this work and assume that susceptible nodes only replicate the strategies of other susceptible nodes (likewise for infected nodes). Consequently, for susceptible nodes, we obtain
\begin{align}
\dot{z}_{S}(t) & =  {z}_{S}(t)(1-{z}_{S}(t)) \big[ [F]_{\SUt} - [F]_{\SPt} \big] \nonumber
\\ & = {z}_{S}(t)(1-{z}_{S}(t))\big[ c_P - L(1-\alpha)(\beta_u z_{I}(t) \nonumber
\\ & \qquad \qquad + \beta_p (1-z_I(t))) y(t) \big] \nonumber
\\ & =: f_S(y(t),z_S(t),z_I(t)). \label{eq:main_zs}
\end{align}

Similarly, for infected nodes, we have
\begin{align}
\dot{z}_{I}(t) & = {z}_{I}(t) \big[ - c_{\IUt} + (c_{\IUt} {z}_{I}(t) + c_{\IPt} (1-{z}_{I}(t)))  \big] \nonumber
\\ & = {z}_{I}(t) (1-{z}_{I}(t)) (c_{\IPt}-c_{\IUt}) \nonumber
\\ & =: f_I(y(t),z_S(t),z_I(t)). \label{eq:main_zi}
\end{align}

Thus, equations \eqref{eq:sis_scalar_com}, \eqref{eq:main_zs}, and \eqref{eq:main_zi} characterize the coupled evolution of the epidemic and population states. 

\begin{lemma}[Invariant set]\label{lem:invariant-set}
For the coupled SIS epidemic and evolutionary behavior dynamics defined by \eqref{eq:sis_scalar_com}, \eqref{eq:main_zs} and \eqref{eq:main_zi}, the set $\setdef{(y, z_S, z_I)}{(y, z_S, z_I) \in [0,1]^3}$ is invariant. 
\end{lemma}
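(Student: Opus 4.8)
The plan is to establish forward invariance through the standard boundary (subtangentiality) criterion for a compact convex set. Since the cube $[0,1]^3$ is convex and the vector field $f = (f_y, f_S, f_I)$ is polynomial in $(y, z_S, z_I)$ — hence $C^\infty$ and in particular locally Lipschitz, so that solutions exist and are unique locally — it suffices to verify that at every boundary point the velocity lies in the tangent cone of the box, i.e. that the flow is either tangent to, or points strictly into, each face. Concretely, on a face where a coordinate attains its lower value $0$ the corresponding velocity component must be $\ge 0$, and on a face where a coordinate attains its upper value $1$ that component must be $\le 0$.

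The argument rests on a structural observation already visible in the dynamics: $f_S$ in \eqref{eq:main_zs} carries the logistic factor $z_S(1-z_S)$, $f_I$ in \eqref{eq:main_zi} carries $z_I(1-z_I)$, and $f_y$ in \eqref{eq:sis_scalar_com} carries the factor $y$. First I would dispose of the five tangent faces. On $\{z_S = 0\}$ and $\{z_S = 1\}$ we have $f_S = 0$; on $\{z_I = 0\}$ and $\{z_I = 1\}$ we have $f_I = 0$; and on $\{y = 0\}$ we have $f_y = 0$. In each case the flow is tangent to the face (indeed each is an invariant manifold of the coupled system), so the required inequality holds with equality.

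The only face that is not flow-invariant is $\{y = 1\}$, which is where I expect the sole subtlety of the proof to sit. Substituting $y = 1$ into \eqref{eq:sis_scalar_com} makes the factor $(1-y)$ vanish, leaving $f_y = -\gamma \le 0$ (using $\gamma \ge 0$). Hence the $y$-component of the velocity on $\{y = 1\}$ is nonpositive, so the flow points weakly inward, as needed. This is the one place the argument uses a sign condition rather than a vanishing factor, and it is the closest thing to an obstacle here; everything else reduces to tangency.

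Finally I would handle the lower-dimensional strata (edges and vertices) uniformly: at such a point several box constraints are simultaneously active, and the tangent cone of the cube there is exactly the intersection of the corresponding half-spaces. Since each active constraint's inequality has already been checked independently, the velocity lies in that intersection, hence in the tangent cone, and Nagumo's theorem yields forward invariance of $[0,1]^3$. As an alternative that avoids invoking Nagumo, one can argue by scalar comparison along a fixed solution: $z_S \equiv 0$, $z_S \equiv 1$ solve the scalar equation \eqref{eq:main_zs} (with the bracketed coefficient frozen along the trajectory), and $z_I \equiv 0$, $z_I \equiv 1$ solve \eqref{eq:main_zi}, so by uniqueness $z_S(t)$ and $z_I(t)$ can never cross these levels; similarly $y \equiv 0$ bounds $y$ below, while the inward velocity $f_y = -\gamma$ at $y = 1$ bounds it above.
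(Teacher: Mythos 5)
Your argument is correct and is essentially the paper's own proof: check that each velocity component vanishes on the corresponding faces ($f_S$, $f_I$ by their logistic factors, $f_y$ at $y=0$), note $f_y=-\gamma\le 0$ at $y=1$, and invoke Nagumo's theorem. The extra care you take with the tangent cone at edges and vertices, and the alternative comparison argument, are fine but not needed beyond what the paper already does.
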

\begin{proof}
	For $y(t) \in \{0,1\}$, $\dot y$ is either zero or negative. Likewise, for $z_I, z_S \in \{0,1\}$, $\dot z_I = \dot z_S =0$. The result is immediate from Nagumo's theorem~\cite[Theorem 4.7]{blanchini2008set}.  
	\end{proof}
	
\section{Equilibrium Characterization and Bifurcation Analysis}\label{sec:stability-analysis}

In this section, we first examine the equilibrium points of the above coupled epidemic-replicator dynamics and their stability properties. We then examine the bifurcations associated with the changes in the stability of these equilibria. 

\subsection{Equilibrium Characterization and Stability Analysis}

First, we consider the evolution of $z_I$ in \eqref{eq:main_zi} which does not depend on $y$ and $z_S$. There are two stationary points $z_I = 0$ and $z_I = 1$, and it is easy to see that, for $c_{\IUt} > c_{\IPt}$, $z_I = 1$ is unstable and $z_I = 0$ is exponentially stable with basin of attraction $[0,1)$. It is also quite intuitive that if $c_{\IPt} < c_{\IUt}$, infected nodes prefer to use protection, and the strategy for infected nodes should converge to adopting protection. 

Thus, in the remainder of this paper, we only focus on equilibria with $z_I = 0$. We begin with introducing a few variables that will be used to define the equilibrium points. 
	\begin{multline*}
		y^*_u : = 1 - \frac{\gamma}{\beta_p}, \quad  y^*_{\mr{int}}:= \frac{c_P}{L(1-\alpha)\beta_p}, \quad
		y^*_p := 1 - \frac{\gamma}{\alpha \beta_p}, \\  \text{and } z^*_{S,\mr{int}} := \frac{1}{1-\alpha} \left[ \frac{\gamma}{\beta_p(1-y^*_{\mr{int}})} - \alpha \right].
	\end{multline*}
	
	We now define all possible equilibria of the coupled SIS epidemic and evolutionary behavior dynamics  (\ref{eq:sis_scalar_com}--\ref{eq:main_zi}) corresponding to $z_I=0$:
	\begin{multline*}
		\mathbf{E0}= (0,0,0), \quad 	\mathbf{E1}= (0,1,0),  \quad  \mathbf{E2}= (y^*_u,1,0), \\
		\mathbf{E3}= (y^*_{\mr{int}},z^*_{S,\mr{int}},0),  \text{ and}  \quad  \mathbf{E4}= (y_p^*,0,0). 
	\end{multline*}
	
At $\mathbf{E0}$ everyone adopts protection and there is no infection. At $\mathbf{E1}$, there is no infection, and susceptible individuals do not adopt protection. $ \mathbf{E2}$ is an endemic equilibrium, i.e., a fraction of the population is infected, at which  susceptible individuals continue to remain unprotected. $\mathbf{E3}$ is an endemic equilibrium  at which some  susceptible individuals adopt protection. Finally, $\mathbf{E4}$ is an endemic equilibrium  at which all  susceptible individuals adopt protection. The existence and stability of these equilibria are established below and summarized in Table \ref{tab:eq_summary}. 

\begin{proposition}[Equilibria and Stability]\label{prop:equilibria-and-stability}
For the equilibrium points  of the coupled SIS epidemic and evolutionary behavioral dynamics  (\ref{eq:sis_scalar_com}--\ref{eq:main_zi}) corresponding to $z_I=0$, the following statements hold: 
\begin{enumerate}
\item $\mathbf{E0}$ exists for all parameter regimes,  and is unstable; 
\item $\mathbf{E1}$ exists for all parameter regimes, is stable if $\beta_p < \gamma$, and is unstable, otherwise; 
\item $ \mathbf{E2}$ exists only when $\beta_p > \gamma$, is stable when $y^*_u < y^*_{\mr{int}}$, and is unstable otherwise; 
\item $\mathbf{E3}$ exists only when $ y^*_p< y^*_{\mr{int}} <  y^*_u$, and is stable; 
\item  $\mathbf{E4}$ exists only when $\gamma < \alpha \beta_p$, is stable when $y^*_p > y^*_{\mr{int}}$, and is unstable otherwise. 
\end{enumerate}
\end{proposition}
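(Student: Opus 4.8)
The plan is to exploit the block-triangular structure of the Jacobian. Since $f_I$ in \eqref{eq:main_zi} depends only on $z_I$ and equals $z_I(1-z_I)(c_{\IPt}-c_{\IUt})$, at any equilibrium with $z_I = 0$ one has $\partial f_I/\partial y = \partial f_I/\partial z_S = 0$, so the full $3\times 3$ Jacobian is block lower-triangular in the $z_I$-coordinate with a decoupled eigenvalue $\partial f_I/\partial z_I = c_{\IPt}-c_{\IUt} < 0$ (by the standing assumption $c_{\IUt} > c_{\IPt}$, consistent with $z_I=0$ being exponentially stable as noted before the proposition). Hence the local stability of every listed point is governed entirely by the $2\times 2$ block $J_2$ of the reduced $(y,z_S)$ system, in which setting $z_I\equiv 0$ collapses the transmission factor $\beta_u z_I + \beta_p(1-z_I)$ to $\beta_p$. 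Writing $g(z_S):=\alpha+(1-\alpha)z_S$, the reduced dynamics are $\dot y = [(1-y)g(z_S)\beta_p-\gamma]y$ and $\dot z_S = z_S(1-z_S)[c_P - L(1-\alpha)\beta_p y]$.

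I would first settle existence by solving $f_y=0$ and $f_S=0$. A short case split on whether $y=0$ or the bracket $(1-y)g(z_S)\beta_p-\gamma$ vanishes, combined with $z_S\in\{0,1\}$ or $y=y^*_{\mr{int}}$, recovers exactly $\mathbf{E0}$--$\mathbf{E4}$; in particular $\mathbf{E0},\mathbf{E1}$ always solve the equations, while $\mathbf{E2}$ and $\mathbf{E4}$ require their endemic levels to be positive, giving $y^*_u>0\iff\beta_p>\gamma$ and $y^*_p>0\iff\gamma<\alpha\beta_p$. The one delicate algebraic point is the existence window for the interior equilibrium $\mathbf{E3}$: I would substitute the definition of $z^*_{S,\mr{int}}$ and show, using the identities $\gamma=\alpha\beta_p(1-y^*_p)$ and $\gamma=\beta_p(1-y^*_u)$, that $z^*_{S,\mr{int}}>0\iff y^*_{\mr{int}}>y^*_p$ and $z^*_{S,\mr{int}}<1\iff y^*_{\mr{int}}<y^*_u$, so that $z^*_{S,\mr{int}}\in(0,1)$ is precisely $y^*_p<y^*_{\mr{int}}<y^*_u$.

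For the equilibria lying on a face of the cube ($\mathbf{E1},\mathbf{E2}$ with $z_S=1$ and $\mathbf{E4}$ with $z_S=0$), the cross term $\partial f_S/\partial y \propto z_S(1-z_S)$ vanishes, so $J_2$ is triangular and its eigenvalues are the diagonal entries. At $\mathbf{E0}$ the entry $\partial f_S/\partial z_S=c_P>0$ forces instability. At $\mathbf{E1}$ the $z_S$-eigenvalue is $-c_P<0$ and the $y$-eigenvalue $\beta_p-\gamma$ discriminates stability, giving stable $\iff\beta_p<\gamma$. At $\mathbf{E2}$ and $\mathbf{E4}$ (where $y^*>0$) the $y$-eigenvalue simplifies via the equilibrium relation $(1-y^*)g(z_S^*)\beta_p=\gamma$ to the manifestly negative $-y^*g(z_S^*)\beta_p$, while the $z_S$-eigenvalue $(1-2z_S^*)[c_P-L(1-\alpha)\beta_p y^*]$ reduces, using $c_P=L(1-\alpha)\beta_p y^*_{\mr{int}}$, to $L(1-\alpha)\beta_p(y^*_u-y^*_{\mr{int}})$ at $\mathbf{E2}$ and to $L(1-\alpha)\beta_p(y^*_{\mr{int}}-y^*_p)$ at $\mathbf{E4}$; this yields exactly the thresholds $y^*_u<y^*_{\mr{int}}$ and $y^*_p>y^*_{\mr{int}}$ in statements (3) and (5).

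Finally, $\mathbf{E3}$ is handled by the trace--determinant test. Because $y^*_{\mr{int}}$ annihilates the bracket in $\dot z_S$, the entry $\partial f_S/\partial z_S=(1-2z^*_{S,\mr{int}})[c_P-L(1-\alpha)\beta_p y^*_{\mr{int}}]=0$, so $J_2$ has a negative $(1,1)$ entry, a zero $(2,2)$ entry, and off-diagonal entries of opposite sign ($\partial f_y/\partial z_S>0$ and $\partial f_S/\partial y<0$ since $z^*_{S,\mr{int}}\in(0,1)$). Thus $\operatorname{trace}(J_2)<0$ and $\det(J_2)>0$ throughout the existence region, so both eigenvalues have negative real part and $\mathbf{E3}$ is stable whenever it exists, establishing statement (4). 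I expect the main obstacle to be bookkeeping rather than any deep step: correctly converting $z^*_{S,\mr{int}}\in(0,1)$ into the $y$-threshold ordering for $\mathbf{E3}$, and consistently tracking the signs of the reduced diagonal entries across the three faces so that each inequality in the proposition and in Table~\ref{tab:eq_summary} comes out with the correct orientation.
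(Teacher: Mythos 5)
Your proposal is correct and follows essentially the same route as the paper's proof: decoupling the $z_I$-eigenvalue $c_{\IPt}-c_{\IUt}<0$, reading off the remaining eigenvalues from the (upper-)triangular $2\times 2$ blocks at $\mathbf{E0}$, $\mathbf{E1}$, $\mathbf{E2}$, $\mathbf{E4}$, deriving the existence window of $\mathbf{E3}$ from $z^*_{S,\mr{int}}\in(0,1)$, and applying the trace--determinant test at $\mathbf{E3}$. The sign computations and threshold equivalences you outline match those in the appendix.
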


The proof is presented in the Appendix.

%%%%%%%%%%%%%%%%%%%%%%%%%%%%
%%%%%%%%%%%%%%%%%%%%%%%%%%%%
\ignore{
\begin{proof}

\end{proof}
}

\subsection{Bifurcation Analysis}\label{sec:bif-analysis}

We now numerically explore the bifurcations associated with the transition of stability among the equilibria. For the numerical illustration we choose the parameter values in equations \eqref{eq:sis_scalar_com}, \eqref{eq:main_zs}, and \eqref{eq:main_zi} as  summarized below: 

\begin{center}
	\begin{tabular}{|c|c|c|c|c|c|c|}
	\hline
	$c_P$ & 	$\alpha$ & $\beta_u$ &	$c_{\IUt}$ & $L$ &  $\beta_p$ & $c_{\IPt}$  \\ \hline 
	$1$ & $0.5$ & $0.3$ & $2$ & $80$ & $0.15$ & $1$ \\ \hline 
\end{tabular} 
\end{center}

We adopt the recovery rate $\gamma$ as a bifurcation parameter and use the numerical continuation package MATCONT~\cite{dhooge2003matcont} to compute the bifurcation diagram. The bifurcation diagram is shown in Fig.~\ref{fig:bifurcation-diagram}.

\begin{figure}[ht!]
%	\centering 
\includegraphics[width=\linewidth]{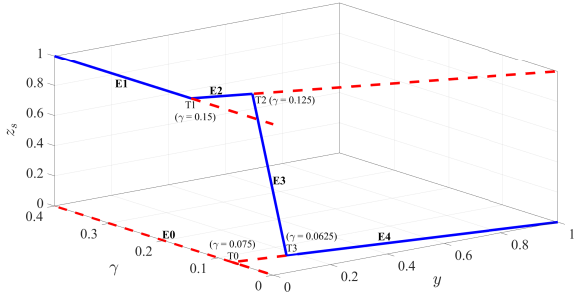} \\
\caption{Bifurcation diagram of the equilibria of the coupled SIS epidemic \\
	and evolutionary behavioral dynamics. (solid blue: stable branch and dashed red: unstable branch of equilibria)} \label{fig:bifurcation-diagram}
\end{figure}

For $\gamma \to 0^+$, $\mathbf{E4}$ is the stable equilibria, while $\mathbf{E0}$, $\mathbf{E1}$, and $\mathbf{E2}$ are unstable. As the value of $\gamma $ is increased at point T3 in  Fig.~\ref{fig:bifurcation-diagram}, $\mathbf{E4}$  exchanges stability to $\mathbf{E3}$ in a transcritical bifurcation. Note that the unstable branch of $\mathbf{E3}$ is not visible since it is associated with negative values of $z_S$. As the value of $\gamma$ is increased, the fraction of susceptible population adopting the protection decreases and at T2, $\mathbf{E3}$  exchanges stability to $\mathbf{E2}$  in another transcritical bifurcation. Again, the unstable branch of $\mathbf{E3}$ at T2 corresponds to $z_s>1$ and is not visible in the bifurcation diagram. Upon further increasing $\gamma$, the fraction of the infected population continues to decrease, and at T1, $\mathbf{E2}$ exchanges stability with the disease-free equilibria $\mathbf{E1}$.  The unstable branch of $\mathbf{E2}$ at T1 corresponds to negative values of $y$ and is not visible. 

Another transcritical bifurcation takes place at T0 $(\gamma= \alpha \beta_p)$, where $\mathbf{E0}$ and $\mathbf{E4}$ cross.   For $\gamma < \alpha \beta_p$ (resp., $\gamma > \alpha \beta_p$), $\mathbf{E0}$ has two (resp., one) eigenvalues in the right-half plane. As $\mathbf{E4}$ approaches T0 from $y >0$, it has one eigenvalue in the right-half plane, while for $y<0$ near T0, $\mathbf{E4}$ is stable. Thus, the transcritical bifurcation at T0 corresponds to the exchange of the stability of stable and unstable eigenvalues of two unstable equilibrium points.  

\begin{figure}[ht!]
\includegraphics[width=\linewidth]{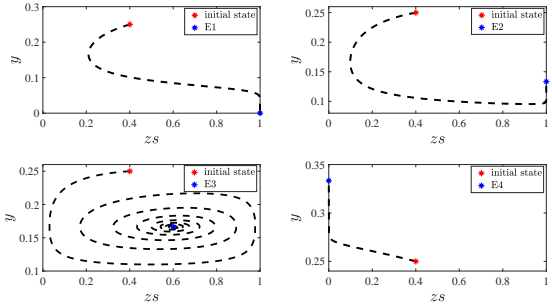} \\
\caption{Trajectories for equilibrium points $\mathbf{E1}$, $\mathbf{E2}$, $\mathbf{E3}$, and $\mathbf{E4}$. The top panels shows trajectory for $\gamma=0.2$  and $\gamma=0.13$ at which $\mathbf{E1}$ and $\mathbf{E2}$  are  stable, respectively. The bottom panels shows trajectory for $\gamma=0.1$  and $\gamma=0.05$ at which $\mathbf{E3}$ and $\mathbf{E4}$  are  stable, respectively. 
\label{fig:equilibria}}
\end{figure}

Fig.~\ref{fig:equilibria} illustrates the behavior of trajectories for equilibrium points $\mathbf{E1}$, $\mathbf{E2}$, $\mathbf{E3}$, and $\mathbf{E4}$. For  $\mathbf{E1}$, $\mathbf{E2}$ and $\mathbf{E4}$, the trajectories evolve such that they first approach $z_S =0$, i.e., every susceptible individual adopts protection and subsequently, they approach $z_S =0$ or $1$ depending on the equilibria. In contrast, trajectories for $\mathbf{E3}$ are highly oscillatory which we investigate further in the next section. 

\section{Coupled  Epidemic-Behavioral Dynamics under Timescale Separation}\label{sec:timescale-separation}

We now study the coupled epidemic-behavioral dynamics \eqref{eq:sis_scalar_com}, \eqref{eq:main_zs}, and \eqref{eq:main_zi} under timescale separation to obtain further insights into their behavior. 

\subsection{Replicator dynamics being faster than epidemic dynamics}

In this regime, the coupled dynamics is modeled as a slow-fast system:
\begin{align} \label{eq:coupled-dynamics-timescale_slow_epi}
\begin{split}
\dot{y}(t) & =  f_y(y(t),z_S(t),z_I(t)) \\
\epsilon \dot{z}_{S}(t) & =  f_S(y(t),z_S(t),z_I(t)) \\
\epsilon \dot{z}_{I}(t) & = f_I(y(t),z_S(t),z_I(t)), 
\end{split}
\end{align}
where $\epsilon \in (0,1]$ is a timescale separation variable \cite{berglund2006noise}. 

At a given epidemic prevalence $y$, we characterize the (stable) equilibria of the fast system involving the replicator dynamics with states $(z_S,z_I)$. For reasons discussed earlier, we focus on equilibria with $z_I = 0$. It is now easy to see that if $y \neq y^*_{\mr{int}}$, there are two equilibrium points: $(0,0)$ and $(1,0)$. If $y = y^*_{\mr{int}}$, then $(z_S,0)$ is an equilibrium point of the fast system for any $z_S \in [0,1]$. Following analogous arguments as the proof of Proposition \ref{prop:equilibria-and-stability}, it follows that  $(0,0)$ is locally stable for the fast system when $y > y^*_{\mr{int}}$ and $(1,0)$ is locally stable for the fast system when $y < y^*_{\mr{int}}$. Consequently, we obtain the following reduced dynamics for the slow system: 
\begin{align}
\dot{y}(t) = 
\begin{cases}
\big[(1-y(t)) \beta_p - \gamma \big] y(t), \quad & \text{if } y < y^*_{\mr{int}}, 
\\ \big[(1-y(t)) \alpha \beta_p - \gamma \big] y(t), \quad & \text{if } y > y^*_{\mr{int}},
\end{cases}
\label{eq:epi_slow}
\end{align}
which approximates the coupled dynamics \eqref{eq:coupled-dynamics-timescale_slow_epi} in the limit $\epsilon \to 0$. We now establish the limiting behavior of the above dynamics \eqref{eq:epi_slow}. 

\begin{proposition}[Trajectories under fast behavioral response]\label{prop:epidemic_slow}
	For the epidemic dynamics \eqref{eq:epi_slow} with $y(0) \neq 0$,  the following statements hold
	\begin{enumerate}
		\item if $ y^*_u < 0$, then $y(t)$ monotonically decreases and converges to the origin; 
		\item if  $0 < y^*_u < y^*_{\mr{int}}$, then $y(t)$ monotonically converges to $y_u^*$; 
		\item if  $ y^*_p < y^*_{\mr{int}} < y^*_u$, then $y(t)$ converges to $ y^*_{\mr{int}}$, and the convergence may not be  monotonic;
		\item if  $y^*_{\mr{int}} < y^*_p$, then $y(t)$ monotonically converges to $y_p^*$.
	\end{enumerate}
\end{proposition}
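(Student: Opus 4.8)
The plan is to treat \eqref{eq:epi_slow} as a scalar autonomous differential equation whose right-hand side is piecewise defined across the switching value $y = y^*_{\mr{int}}$, and to read off the asymptotic behavior from the sign of $\dot y$ on each subinterval of $[0,1]$. Write $g_u(y) := \big[(1-y)\beta_p - \gamma\big]y$ for the branch active on $y < y^*_{\mr{int}}$ and $g_p(y) := \big[(1-y)\alpha\beta_p - \gamma\big]y$ for the branch active on $y > y^*_{\mr{int}}$. A direct factorization gives $g_u(y) = \beta_p\, y\,(y^*_u - y)$ and $g_p(y) = \alpha\beta_p\, y\,(y^*_p - y)$, so on $(0,1]$ the field $g_u$ is positive exactly for $y < y^*_u$ and negative for $y > y^*_u$, while $g_p$ is positive exactly for $y < y^*_p$ and negative for $y > y^*_p$; moreover $y^*_p < y^*_u$ always holds because $\alpha \in (0,1)$. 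Since $y^*_{\mr{int}} > 0$, the origin is a common equilibrium of both branches, and with $y(0) \neq 0$ the trajectory remains in $(0,1]$ by Lemma~\ref{lem:invariant-set}. The remaining work is then pure sign bookkeeping, with the only genuine difficulty occurring at the switching surface.

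For statements (i), (ii), and (iv) the surface is either never reached or crossed transversally, so $\dot y$ keeps a constant sign along each trajectory and monotone convergence follows from the standard scalar argument that a bounded monotone trajectory converges to the equilibrium it is heading toward. Concretely, in (i) $y^*_u < 0$ forces $y^*_p < y^*_u < 0$, hence $g_u < 0$ and $g_p < 0$ throughout $(0,1]$, giving $\dot y < 0$ everywhere and $y(t) \downarrow 0$; in (ii) the inequality $y > y^*_{\mr{int}} > y^*_u > y^*_p$ makes $g_p < 0$ on the upper region, so a trajectory starting above $y^*_{\mr{int}}$ descends into the lower region, where $g_u$ drives $y$ monotonically to its stable zero $y^*_u < y^*_{\mr{int}}$, while a trajectory starting below converges monotonically to $y^*_u$ directly; in (iv) $g_u > 0$ for $y < y^*_{\mr{int}}$ and $g_p > 0$ for $y^*_{\mr{int}} < y < y^*_p$, so both branches point upward through the surface and $y$ increases monotonically to the stable zero $y^*_p$ of $g_p$ (or descends monotonically to it if started above). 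These are routine and mirror the stability computations in the proof of Proposition~\ref{prop:equilibria-and-stability}.

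The delicate case, and the main obstacle, is (iii), where $y^*_p < y^*_{\mr{int}} < y^*_u$; this is also the regime of the oscillatory $\mathbf{E3}$ trajectories in Fig.~\ref{fig:equilibria}. Here $g_u(y) > 0$ for every $y < y^*_{\mr{int}}$ (since $y < y^*_{\mr{int}} < y^*_u$) and $g_p(y) < 0$ for every $y > y^*_{\mr{int}}$ (since $y > y^*_{\mr{int}} > y^*_p$), so the two branches point toward the switching surface from opposite sides. Thus $y^*_{\mr{int}}$ is not a zero of either branch but an attracting sliding point, and the elementary argument fails because the right-hand side of \eqref{eq:epi_slow} is undefined there. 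I would prove convergence using the Lyapunov function $V(y) := (y - y^*_{\mr{int}})^2$: for $y < y^*_{\mr{int}}$ one has $\dot V = 2(y - y^*_{\mr{int}})g_u(y) < 0$, and for $y > y^*_{\mr{int}}$ likewise $\dot V = 2(y - y^*_{\mr{int}})g_p(y) < 0$, so $V$ strictly decreases off the surface and $y(t) \to y^*_{\mr{int}}$ by a standard argument bounding $\dot V$ away from zero on any region $|y - y^*_{\mr{int}}| \geq \delta$. For well-posedness once the trajectory reaches $y^*_{\mr{int}}$ in finite time, I would interpret solutions in the Filippov sense, noting that the set-valued map at the surface is the segment joining $g_u(y^*_{\mr{int}}) > 0$ and $g_p(y^*_{\mr{int}}) < 0$, which contains $0$, so $y \equiv y^*_{\mr{int}}$ is the unique sliding solution. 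Finally, the hedge that convergence \emph{may not be} monotonic is explained by the fact that the limit lies on a discontinuity surface rather than at a hyperbolic equilibrium: in the regularized system with $\epsilon > 0$ the fast variable $z_S$ lags the switching, so $y$ overshoots $y^*_{\mr{int}}$ before being corrected, producing the oscillatory approach to be studied in the sequel.
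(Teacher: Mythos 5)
Your proof is correct and follows essentially the same route as the paper's: a case-by-case sign analysis of the two scalar branches of \eqref{eq:epi_slow}, with $y^*_{\mr{int}}$ identified as an attracting sliding surface in case (iii). Your treatment of case (iii) via Filippov solutions and the Lyapunov function $(y-y^*_{\mr{int}})^2$ is in fact somewhat more rigorous than the paper's, which establishes monotonicity of $|y(t)-y^*|$ within each branch and then simply asserts the sliding behavior at $y^*_{\mr{int}}$.
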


\ignore{
\begin{proof}
\end{proof}
}

The proof is presented in the Appendix.

\begin{remark}
The dynamics in \eqref{eq:epi_slow} potentially represent a class of non-pharmaceutical interventions where authorities impose social distancing measures that reduce the infection rate by a factor $\alpha$ when the infection prevalence exceeds a threshold. Thus, the result in Proposition \ref{prop:epidemic_slow} is  of potential  independent interest. In prior work \cite{hota2020impacts}, an analogous result was obtained in the special case where $\alpha = 0$. 
\end{remark}

\subsection{Epidemic dynamics being faster than behavior}

When epidemic dynamics evolves faster than the adaptation of protective behavior, we arrive at the dynamics 
\begin{align} \label{eq:coupled-dynamics-timescale_fast_epi}
\begin{split}
\epsilon \dot{y}(t) & =  f_y(y(t),z_S(t),z_I(t)) \\
\dot{z}_{S}(t) & =  f_S(y(t),z_S(t),z_I(t)) \\
\dot{z}_{I}(t) & = f_I(y(t),z_S(t),z_I(t)), 
\end{split}
\end{align}
where $\epsilon \in (0,1]$ as before. Let $\subscr{\beta}{eff}(z_S,z_I) := (z_S + \alpha (1-z_S)) (\beta_u z_I + \beta_p (1-z_I))$ denote the effective infection rate as a function of player strategies. Then, in the limit $\epsilon \to 0$, the infection prevalence converges to the stable equilibrium point of the fast system given by
\begin{equation}\label{eq:11}
y^*(z_S,z_I) = 
\begin{cases}
0, \quad & \text{if } \subscr{\beta}{eff}(z_S,z_I) < \gamma,
\\ 1-\frac{\gamma}{\subscr{\beta}{eff}(z_S,z_I)}, \quad & \text{if } \subscr{\beta}{eff}(z_S,z_I) > \gamma. 
\end{cases}
\end{equation}

Consequently, we obtain the following reduced dynamics for the slow system: 
\begin{align} \label{eq:slow-dynamics-timescale_fast_epi}
\begin{split}
\dot{z}_{S}(t) & =  f_S(y^*(z_S(t),z_I(t)),z_S(t),z_I(t)) \\
\dot{z}_{I}(t) & = f_I(y^*(z_S(t),z_I(t)),z_S(t),z_I(t)), 
\end{split}
\end{align}
which approximates the behavioral dynamics when $\epsilon \to 0$.

\begin{remark}
The slow timescale dynamics~\eqref{eq:slow-dynamics-timescale_fast_epi} sets the value of $\subscr{\beta}{eff}(z_S(t),z_I(t))$ for the faster $y$ dynamics in~\eqref{eq:coupled-dynamics-timescale_fast_epi}. The class of  problems in which the bifurcation parameter of a differential equation is varied on a slower timescale are referred to as dynamic bifurcation problems~\cite{berglund2006noise}. It follows from the theory of dynamic bifurcations that when $\subscr{\beta}{eff}(z_S(t),z_I(t))$  is not in the neighborhood of the bifurcation point $\subscr{\beta}{eff}(z_S(t),z_I(t))  = \gamma$, the system trajectories track the stable branches of the bifurcation diagram. System trajectories may not track the stable branch of the bifurcation diagram in a neighborhood of  bifurcation point, which leads to the so-called \emph{bifurcation delay}~\cite{berglund2006noise}.  
\end{remark}

\subsection{Numerical Results}

We now provide further insights into the dynamics under timescale separation, which exhibits large oscillations and the bifurcation delay phenomenon via numerical simulations. 

We first  investigate the behavior of  system~\eqref{eq:coupled-dynamics-timescale_slow_epi} for $\epsilon \in \{0.01, 0.1, 1\}$.  The same model parameters as in Section~\ref{sec:bif-analysis} are selected and $\gamma$ is selected as $0.1$ so that $\mathbf{E3}$ is the stable equilibrium point. Fig.~\ref{fig:slow_fast_ga_p1} (right panel) shows the time-evolution of $y$ and $z_S$. More oscillatory behavior is observed as $\epsilon$ becomes smaller, i.e., the behavioral dynamics becomes faster. To understand this behavior, we focus on $z_S$ dynamics~\eqref{eq:main_zs} with 
	\[\Delta F =   c_P - L(1-\alpha)(\beta_u z_{I}(t)  + \beta_p (1-z_I(t))) y(t)
\]
as a dynamic parameter. To this end, we  illustrate $z_S$ trajectories in $z_S-\Delta F$ plane in  Fig.~\ref{fig:slow_fast_ga_p1} (left panel).  
%where 
%	\[\Delta F =   c_P - L(1-\alpha)(\beta_u z_{I}(t)  + \beta_p (1-z_I(t))) y(t)
%	\]
%	is the multiplicative term in the $z_S$ dynamics~\eqref{eq:main_zs}. 

Recall that if $\Delta F$ is a positive (resp. negative) constant, then $z_S =1$ (resp. $z_S=0$) is a stable equilibrium point. Accordingly, $z_S  =0$ and $z_S=1$ are marked blue and red in Fig.~\ref{fig:slow_fast_ga_p1} (left panel), when they are stable and unstable, respectively. 
	
Since behavioral dynamics is fast, $y$ is quasi-stationary and $z_I$ very quickly converges to zero. In Fig.~\ref{fig:slow_fast_ga_p1} (left panel), the initial fraction of the infected population is sufficiently high such that $\Delta F <0$, then the fast behavioral dynamics quickly converge to $z_S =0$ (the bottom solid blue line), i.e., every susceptible individual adopts protection. This results in a decrease in the fraction of the infected population and increases $\Delta F$. As $\Delta F$ becomes positive, $z_S=0$ becomes unstable and $z_S$ quickly jumps to $z_S =1$ (the top solid blue line), and a similar process repeats which again drives $\Delta F$ to negative values. This process leads to the highly oscillatory behavior seen in Fig.~\ref{fig:slow_fast_ga_p1}. Eventually, trajectories converge such that $\Delta F =0$, and in this regime $z_S$ settles to an equilibrium value in the interval $(0,1)$.

\begin{figure}[ht!]
	\centering
	\includegraphics[width=\linewidth]{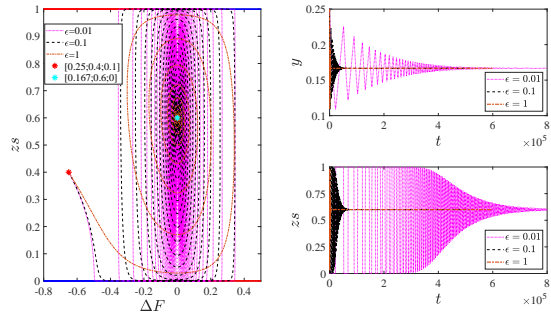}
	\caption{Trajectories of the coupled epidemic-behavioral dynamics when the behavioral dynamics evolve on a faster timescale. The left panel shows trajectories in $z_S-\Delta F$ plane. For $\Delta F <0$ (resp. $\Delta F >0$) $z_S =0$ (resp. $z_S =1$) is the stable equilibrium point and is shown in solid blue line. The slower epidemic dynamics drive $\Delta F$ to change sign leading to an abrupt switch in the equilibrium value of $z_S$, which in turn causes $\Delta F$ to again switch sign. This process repeats and results in oscillatory trajectories. Finally, the trajectories reach an equilibrium point at which $\Delta F =0$. The right panel shows the oscillatory trajectories. The oscillations increase as $\epsilon$ is decreased.}
	\label{fig:slow_fast_ga_p1}
\end{figure}

We now investigate the behavior of  system~\eqref{eq:coupled-dynamics-timescale_fast_epi}   for $\epsilon \in \{0.0001, 0.01, 1\}$. The same model parameters as above are selected. The time evolution of $y$ and $z_S$ trajectories is shown in Fig.~\ref{fig:fast_slow} (right panels). To better understand these trajectories,  we investigate them  in the $y-\subscr{\beta}{eff}$ plane in Fig.~\ref{fig:fast_slow}, left panel), where $ \subscr{\beta}{eff}(z_S,z_I) = (z_S + \alpha (1-z_S)) (\beta_u z_I + \beta_p (1-z_I))$ is a function of $z_I$ and $z_S$. The solid blue and red lines show the bifurcation diagram of the faster SIS epidemic dynamics with $\subscr{\beta}{eff}$ as the bifurcation parameter.  
	
In Fig.~\ref{fig:fast_slow_ga_p1}, the initial condition is selected such that  the initial value of $\subscr{\beta}{eff}$ corresponds to endemic equilibria being stable. For small $\epsilon$ values, the trajectories very quickly converge to the stable branch of the bifurcation diagram and traverse along with the bifurcation diagram until they reach the stable equilibria $\mathbf{E3}$ of the coupled dynamics. This behavior is consistent with the theory of dynamic bifurcations~\cite{berglund2006noise}. For moderate values of $\epsilon$, similar behavior is observed but trajectories overshoot the stable branch and subsequently return to it. For $\epsilon=1$, we observe oscillatory behavior similar to the case in which the behavioral dynamics are faster.

In Fig.~\ref{fig:fast_slow_ga_p07},  the initial value of $\subscr{\beta}{eff}$ corresponds to disease-free equilibria being stable. For small $\epsilon$, the trajectories converge very quickly to the stable branch of the bifurcation diagram for the SIS epidemic dynamics and track the stable branch until they reach the bifurcation point. Subsequently, they continue on the unstable branch for some time before switching to the stable branch. This behavior is referred to as \emph{bifurcation delay} \cite{berglund2006noise} and is attributed to the fact that at the bifurcation point, the SIS epidemic dynamics become slower than the otherwise slower behavioral dynamics.  Similar behavior is observed for moderate $\epsilon$, while for $\epsilon=1$ we observe oscillatory behavior as earlier. 

 \begin{figure}[ht!]
 	\begin{subfigure}[h]{\linewidth}
 				\centering
 				\includegraphics[scale=0.22]{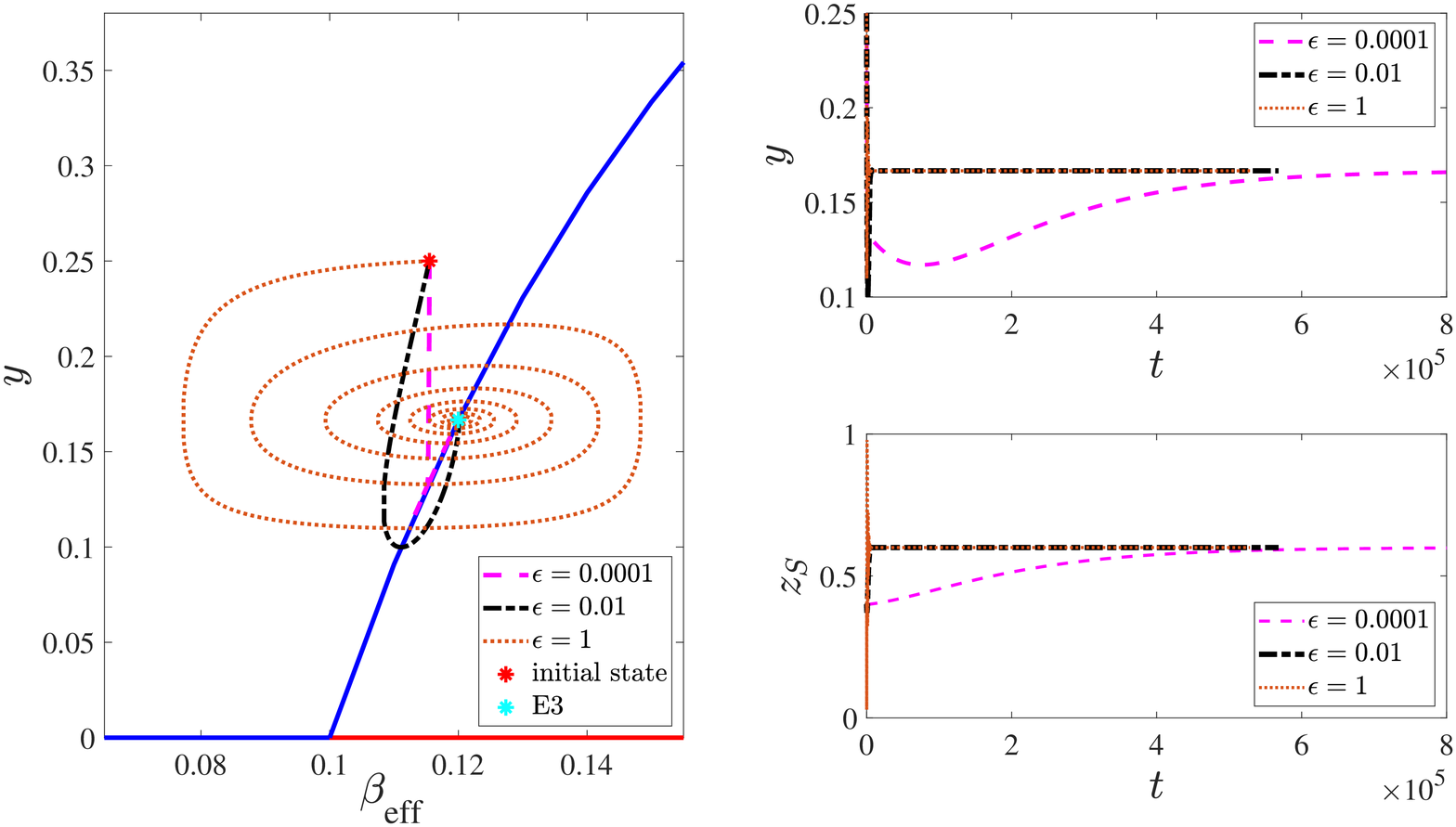}
 				\caption{}
 				\label{fig:fast_slow_ga_p1}
 	\end{subfigure}
 	\begin{subfigure}[h]{\linewidth}
	           \centering
               \includegraphics[scale=0.22]{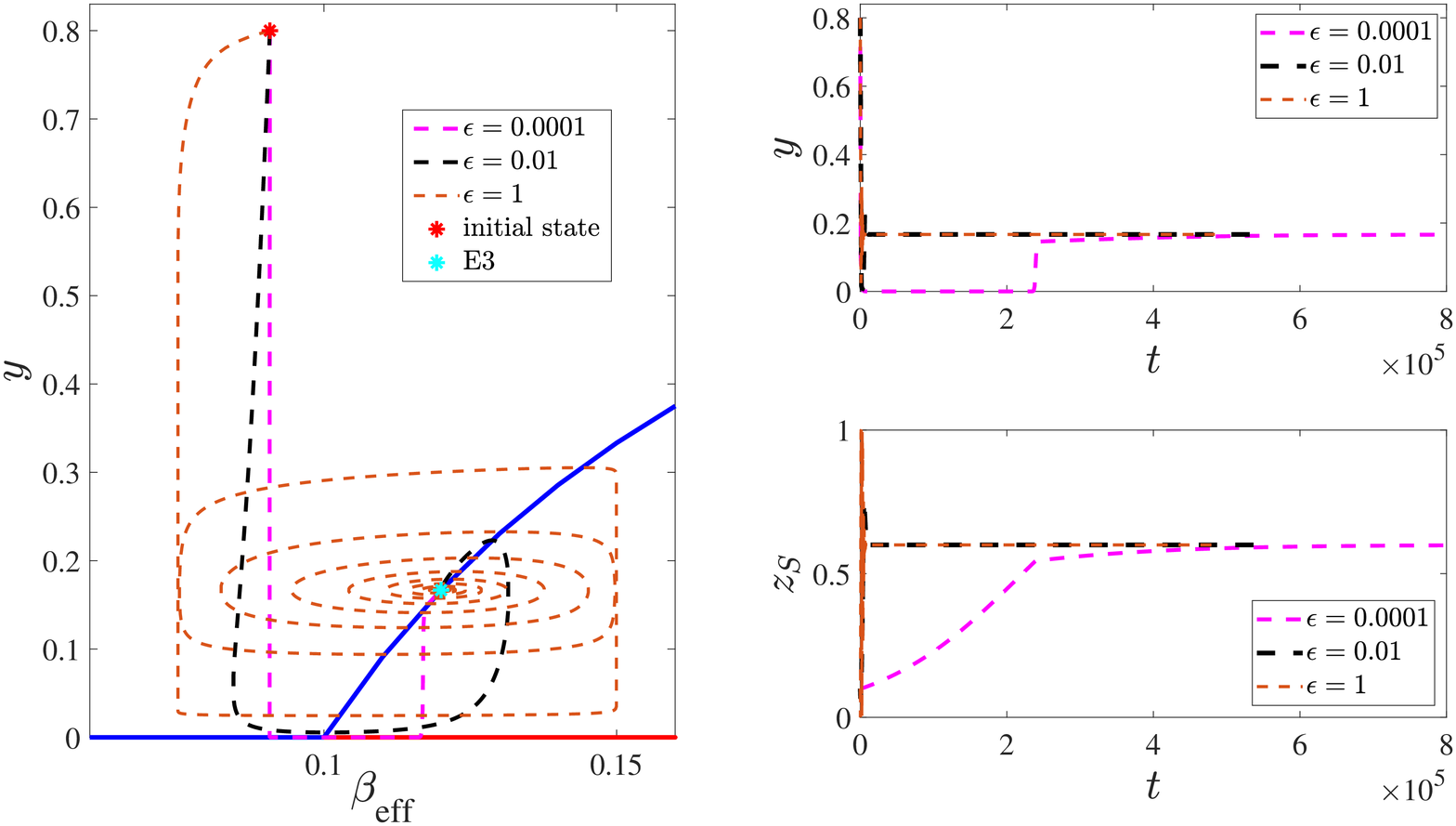}
               \caption{}
               \label{fig:fast_slow_ga_p07}
 	\end{subfigure}
\caption{Trajectories of the coupled epidemic-behavioral dynamics when the epidemic dynamics evolve on a faster timescale. The left panels shows trajectories in $y- \subscr{\beta}{eff}$ plane. The right panels show the time-evolution of $y$ and $z_S$ trajectories. Left panel (a) shows for small $\epsilon$, trajectories quickly converge to the stable branch of the bifurcation diagram and then move along the bifurcation diagram to reach the stable equilibria.  Left panel (b) shows for small $\epsilon$, trajectories quickly converge to the stable branch of the bifurcation diagram and then move along on the bifurcation diagram until they reach the bifurcation point. In a neighborhood of the bifurcation point, trajectories do not move along the stable branch of the bifurcation diagram but they return to the stable branch after  some delay. Oscillatory behavior is observed for larger values of $\epsilon$. The right panel shows the time evolution of $y$ and $z_S$.
\label{fig:fast_slow}}

\end{figure}

\section{Conclusions}\label{sec:conclusions}

We proposed and analyzed a novel model that captures the interaction of epidemic propagation dynamics with human protection adoption behavioral dynamics. For the proposed coupled epidemic-behavioral dynamics, we completely characterized the equilibrium points, their stability properties, and the associated bifurcations. The set of equilibria corresponds to a disease-free population or an endemic, and complete, partial, or no adoption of protection by susceptible population. We showed that the stability of these equilibria is a function of infection and recovery rate as well as the utility functions of the  individuals in the population. We investigated the transient behavior of the model under two timescale scenarios: (i) faster behavioral dynamics, (ii) faster epidemic dynamics. In the former case, we showed that the coupled dynamics may lead to highly oscillatory behavior and may take a long time to converge to the stable equilibria. In the latter case, we illustrated the bifurcation delay phenomenon in the onset of endemic. 

\bibliographystyle{ieeetran}
\bibliography{refs_new}

%\ifCLASSOPTIONcaptionsoff
%  \newpage
%\fi

\appendix

\textbf{Proof of Proposition \ref{prop:equilibria-and-stability}:} It can be verified that $\mathbf{E0}$ and $\mathbf{E1}$  are always equilibria of the coupled dynamics. The Jacobian matrix at $\mathbf{E0}$ and $\mathbf{E1}$ are 
\begin{equation*}
	J_{\mathbf{E0}} = 
	\left[\begin{smallmatrix}
		\alpha \beta_p - \gamma & 0 & 0\\
		0 & c_P & 0\\
		0 & 0 & c_{\IPt} - c_{\IUt}
	\end{smallmatrix}\right], \text{ and } 	J_{\mathbf{E1}} = 
\left[\begin{smallmatrix}
\beta_p - \gamma & 0 & 0\\
0 & -c_P & 0\\
0 & 0 & c_{\IPt} - c_{\IUt}
\end{smallmatrix}\right]. 
\end{equation*}
 $J_{\mathbf{E0}}$ has a positive eigenvalue if $c_P > 0$. Thus, for any nonzero cost of adopting protection, $\mathbf{E0}$ is not a stable equilibrium point of the coupled dynamics. Likewise, $\mathbf{E1}$ is stable if $\beta_p < \gamma$, and unstable, otherwise. 

We now analyze equilibrium points where infection is endemic. It can be verified that $\mathbf{E2}$  exists only when $\beta_p > \gamma$. The Jacobian matrix at $\mathbf{E2}$ is
\begin{equation*}
	J_{\mathbf{E2}} = 
	\left[\begin{smallmatrix}
		d_1 & d_2 & d_3\\
		0 & -[c_P - L(1-\alpha)\beta_py^*_u] & 0\\
		0 & 0 & c_{\IPt} - c_{\IUt}
	\end{smallmatrix}\right]
	,
\end{equation*}
where $d_1 = (1-2y^*_u)\beta_p - \gamma$, $d_2 = y^*_u(1-y^*_u)(1-\alpha)\beta_p > 0$ and $d_3 = y^*_u(1-y^*_u)(\beta_u - \beta_p) > 0$. Thus, $J_{\mathbf{E2}}$ is an upper triangular matrix. Furthermore, the first diagonal entry is
\begin{align*}
	(1-2y^*_u)\beta_p - \gamma & 
	= \gamma - \beta_p < 0,
\end{align*}
in the regime where $\mathbf{E2}$ exists. Therefore, $\mathbf{E2}$ is stable when 
\begin{align*}
	c_P & > L(1-\alpha)\beta_p y^*_u  \iff  y^*_u  < y^*_{\mr{int}}.
%	\\ \Leftrightarrow y^*_u & = 1 - \frac{\gamma}{\beta_p} < \frac{c_P}{L(1-\alpha)\beta_p} =: y^*_{int}. \label{eq:def_yint}
\end{align*}

It can be verified that $\mathbf{E3} = (y^*_{\mr{int}},z^*_{S,\mr{int}},0)$ is an equilibrium point. We now derive conditions that ensure $y^*_{\mr{int}},z^*_{S, \mr{int}} \in [0,1]$. We now examine the conditions under which $y^*_{\mr{int}} \in (0,1)$ and $z^*_{S,\mr{int}} \in (0,1)$.  By definition, $y^*_{\mr{int}} > 0$. We now observe that
\begin{align*}
	z^*_{S,\mr{int}} > 0 & \iff \frac{\gamma}{\alpha \beta_p} > 1 - y^*_{\mr{int}} \iff  y^*_{\mr{int}} > 1 - \frac{\gamma}{\alpha \beta_p} \\
	\text{and }	z^*_{S,\mr{int}} < 1 & \iff\frac{\gamma}{\beta_p} < 1 - y^*_{\mr{int}} \iff  y^*_{\mr{int}} < 1 - \frac{\gamma}{\beta_p}. 
\end{align*}
Thus,  $\mathbf{E3}$ exists when $ y^*_p< y^*_{\mr{int}} <  y^*_u$.  Note that the third row of the Jacobian of the dynamics at $\mathbf{E3}$,  $J_{\mathbf{E3}}$, would be $[0 \quad 0 \quad c_{\IPt} - c_{\IUt}]$ as before, and as a result, $c_{\IPt} - c_{\IUt} < 0$ would be an eigenvalue. Thus, we focus on the $2 \times 2$ sub-matrix containing the first two rows and columns of the Jacobian matrix which simplifies to
\begin{align*}
	\hat{J}_{\mathbf{E3}} = 
	\left[\begin{smallmatrix}
		\frac{-\gamma y^*_{\mr{int}}}{1-y^*_{\mr{int}}} & d_4\\
		- z^*_{S,\mr{int}}(1-z^*_{S,\mr{int}})L(1-\alpha)\beta_p & 0\\
	\end{smallmatrix}\right],
\end{align*}
where $d_4 = y^*_{\mr{int}}(1-y^*_{\mr{int}})(1-\alpha)\beta_p$. For the above matrix, the sum of the eigenvalues is negative and the determinant is positive, and as a result, all three eigenvalues of $J_{\mathbf{E3}}$ are negative. Therefore, $\mathbf{E3}$, when it exists, is a stable equilibrium of the coupled dynamics. 

It can be verified that  $\mathbf{E4}$ exists when $y^*_p \in (0,1)$ or equivalently, when $\gamma < \alpha \beta_p$. The Jacobian matrix at $\mathbf{E4}$ is
\begin{equation}
	J_{\mathbf{E4}} = 
	\left[\begin{smallmatrix}
		(1-2y^*_p)\alpha\beta_p - \gamma & d_5 & d_6\\
		0 & d_7 & 0\\
		0 & 0 & c_{\IPt} - c_{\IUt}
	\end{smallmatrix}\right],
\end{equation}
where $d_5 =  y^*_p(1-y^*_p) (1-\alpha)\beta_p> 0$, $d_6 = y^*_p(1-y^*_p)\alpha(\beta_u - \beta_p)> 0$ and $d_7 = c_P - L(1-\alpha)\beta_py^*_p$. Thus, $J_{\mathbf{E4}}$ is a diagonal matrix with the first diagonal entry
\begin{align*}
	(1-2y^*_p)\alpha\beta_p - \gamma  
	%& = (1 - 2(1-\frac{\gamma}{\alpha\beta_p}))\alpha\beta_p - \gamma \nonumber
%	\\ 
	& = \gamma - \alpha\beta_p < 0,
\end{align*}
in the regime where $\mathbf{E4}$ exists. Therefore, $\mathbf{E4}$ is stable when 
\begin{align*}
	c_P & < L(1-\alpha)\beta_p y^*_p  \iff   y^*_p  > y^*_{\mr{int}}.
%	\Leftrightarrow y^*_p & = 1 - \frac{\gamma}{\alpha\beta_p} > \frac{c_P}{L(1-\alpha)\beta_p} = y^*_{int}. \nonumber
\end{align*}
This concludes the proof. 

\ignore{We evaluate the entries of the Jacobian matrix of the coupled dynamics \eqref{eq:sis_scalar_com}, \eqref{eq:main_zs} and \eqref{eq:main_zi} required to determine the stability of the equilibrium points. Specifically, we compute
\begin{align*}
\frac{\partial f_y}{\partial y} & = \big[(1-y) (z_S + \alpha (1-z_S)) (\beta_u z_I + \beta_p (1-z_I)) - \gamma \big] 
\\ & \qquad - y (z_S + \alpha(1-z_S)) (\beta_u z_I + \beta_p (1-z_I)), 
\\ \frac{\partial f_y}{\partial z_S} & = y \big[(1-y) (1-\alpha) (\beta_u z_I + \beta_p (1-z_I)) \big],
\\ \frac{\partial f_y}{\partial z_I} & = y \big[(1-y) (z_S + \alpha (1-z_S)) (\beta_u - \beta_p)\big].
\end{align*}
Similarly, 
\begin{align*}
\frac{\partial f_S}{\partial y} & = - z_{S}(1-{z}_{S}) L(1-\alpha)(\beta_u z_{I} + \beta_p (1-z_I)),
\\ \frac{\partial f_S}{\partial z_S} & = (1-2z_S) \big[c_P - L(1-\alpha)(\beta_u z_{I} + \beta_p (1-z_I)) y\big]
\\ \frac{\partial f_S}{\partial z_I} & = - z_{S}(1-{z}_{S}) L(1-\alpha)(\beta_u - \beta_p) y
\end{align*}
Finally,
\begin{align*}
\frac{\partial f_I}{\partial y} & = 0, \quad  \frac{\partial f_I}{\partial z_S}  = 0, \quad \frac{\partial f_I}{\partial z_I}  = (1-2z_I) (c_{\IPt}-c_{\IUt}).
\end{align*}}

\textbf{Proof of Proposition \ref{prop:epidemic_slow}:} Recall that since $\alpha \in (0,1)$, we have $y^*_p < y^*_u$. It follows from \cite{mei2017dynamics} that the dynamics 
$$ \dot{y}(t) = [(1-y(t)) \beta - \gamma] y(t) $$
converges to $0$ if $\gamma \geq \beta$ and converges to $y^* = 1-\frac{\gamma}{\beta}$, otherwise. Furthermore, it is easy to see that in case of the latter, $|y(t) - y^*|$ is monotonically decreasing.\footnote{For $V(y) = (y-y^*)^2$, we have $\dot{V}(y) = -2\beta yV(y)$.}  Note that when $y < y^*_{\mr{int}}$ (respectively, $y > y^*_{\mr{int}}$), \eqref{eq:epi_slow} is analogous to the above dynamics with $\beta = \beta_p$ (respectively, $\beta=\alpha \beta_p$). We now analyze the four cases stated above. 

\noindent {\bf Case 1: $y^*_u < 0$}. In this case, we have $\alpha \beta_p < \beta_p < \gamma$ and as a result, $y = 0$ is the only equilibrium both when $y > y^*_{\mr{int}}$ and $y < y^*_{\mr{int}}$. 

\noindent {\bf Case 2: $0 < y^*_u < y^*_{\mr{int}}$}. Since $y^*_p < y^*_u$, $y(t)$ is monotonically decreasing when $y(t) > y^*_{\mr{int}}$. Similarly, $y(t)$ is monotonically decreasing when $y_u^* < y(t) < y^*_{\mr{int}}$ and monotonically increasing for $y(t) < y_u^*$.
%Further, $y^*_u$ is the stable equilibrium of the dynamics when $y < y^*_{\mr{int}}$. 
Thus, the claim follows. % from above arguments. 

\noindent {\bf Case 3: $y^*_p < y^*_{\mr{int}} < y^*_u$}. In this case, $y(t)$ is monotonically decreasing when $y(t) > y^*_{\mr{int}}$ as $\max(0,y^*_p) < y^*_{\mr{int}}$, and monotonically increasing when $y(t) < y^*_{\mr{int}}$ since the corresponding stable equilibrium $y^*_u > y^*_{\mr{int}}$. Thus, $y^*_{\mr{int}}$ acts as a sliding surface for the dynamics \eqref{eq:epi_slow}. $y(t)$ may switch between two subcases in~\eqref{eq:epi_slow} leading to an oscillatory behavior as will be illustrated numerically. 

\noindent {\bf Case 4: $y^*_p > y^*_{\mr{int}}$}. In this case, $y(t)$ is monotonically increasing when $y(t) < y^*_{\mr{int}}$ since $y^*_u > y^*_p > y^*_{\mr{int}}$. Monotonic behavior for $y(t) > y^*_{\mr{int}}$ follows from analogous arguments as Case 2. 
%When $y > y^*_{\mr{int}}$, the corresponding stable equilibrium is $y^*_p \in (y^*_{\mr{int}},1)$. As a result, $y(t)$ converges to $y^*_p$.

This concludes the proof.

\end{document}